\newtheorem{Thm}{Theorem}
\newtheorem{Lem}{Lemma}
\newtheorem{Def}{Definition}
\newenvironment{proof}{\noindent {\textbf{Proof }}}{$\Box$
\medskip}
\newcommand{\defeq}{\stackrel{\mathsf{def}}{=}}
\newcommand\bR{\mbox{$\mathbb{R}$}}
\newcommand\X{\mathcal{X}}
\newcommand\Y{\mathcal{Y}}
\newcommand\Z{\mathcal{Z}}
\newcommand\mcA{\mathcal{A}}
\newcommand\mcP{\mathcal{P}}
\newcommand{\oprt}{{\mathsf{prt}}}
\newcommand{\prt}{{\mathsf{pprt}}}
\newcommand{\R}{{\mathsf{R}}}
\newcommand{\pub}{{\mathsf{pub}}}
\newcommand\ve{\epsilon}
\mathchardef\mhyphen="2D
\newcommand{\suppress}[1]{}
\newcommand\COMMENT[1]{}
\newcommand\abs[1]{| #1 |}
\begin{document}
\title{A quadratically tight partition bound for classical communication complexity and query complexity}
\author{Rahul Jain\thanks{Centre for Quantum Technologies and Department of Computer Science, National University of Singapore. Email: {\tt rahul@comp.nus.edu.sg}} \quad Troy Lee\thanks{Nanyang Technological University and 
Center for Quantum Technologies. Email: {\tt troyjlee@gmail.com}}  \quad Nisheeth K. Vishnoi\thanks{Microsoft Research, India. Email: {\tt nisheeth.vishnoi@gmail.com}} }
\date{}
\maketitle

\begin{abstract}
In this work we introduce, both for classical communication complexity and query complexity,  a modification of the {\em partition bound} introduced by Jain and Klauck~\cite{JainK10}. We call it the {\em public-coin partition bound}. We show that (the logarithm to the base two of) its communication complexity and query complexity versions form, for all relations, a quadratically tight lower bound on the public-coin randomized communication complexity and randomized query complexity respectively.  
\end{abstract}

\noindent{\bf Keywords:} Partition bound, communication complexity, lower bounds, linear programs.

\section{Introduction}

The {\em partition bound} introduced by Jain and Klauck~\cite{JainK10} is known to be one of the strongest lower bound methods in classical communication complexity and query complexity. It is known to be stronger than most other lower bound methods, both in communication complexity and query complexity, except its relationship with the {\em information complexity} lower bound method in communication complexity is unknown. It is an interesting open question, in both these settings, as to how tight this lower bound method is. We are not aware, to the best of our knowledge, of any function or relation where this method is asymptotically weaker either for communication complexity or for query complexity. 

 In this work we introduce, both for communication complexity and query complexity, a modification of the partition bound which we call the {\em public-coin partition bound}. Analogous to the partition bound, our new bound  is also a linear-programming based lower bound method. We show that (the logarithm to the base two of) its communication and query complexity versions continue to form a lower bound on the public-coin communication complexity and randomized query complexity respectively.  In addition we show that the square of  (the logarithm to the base two of)  its communication and query complexity versions form an upper bound on the public-coin communication complexity and randomized query complexity respectively. Also it is easily seen via their linear programs that our new bound is stronger than the partition bound for all relations, both in communication complexity and query complexity.

\section{Communication complexity}
In this section we introduce our new bound in the communication complexity setting.  Let us first recall the partition bound of~\cite{JainK10}.
\begin{Def}[Partition bound~\cite{JainK10}]
Let $f \subseteq \X \times \Y \times \Z$ be a relation. Let $\ve > 0$. The $\ve$-partition bound of $f$, denoted 
$\oprt_\ve(f)$, is given by the  optimal value of the following linear program. Below $R$ represents a rectangle in 
$\X \times \Y$ and $(x,y,z) \in \X \times \Y \times \Z$.

\vspace{0.2in}

{\footnotesize
\begin{minipage}{3in}
    \centerline{ \underline{Primal}}\vspace{-0.2in}
 \begin{align*}
      \text{min:}\quad & \sum_{z} \sum_{R}  w_{z,R} \\
      & \forall (x,y) : \sum_{z: (x,y,z) \in f} \sum_{R: (x,y) \in R} w_{z,R} \geq 1 - \epsilon,\\
      & \forall (x,y) : \sum_{R: (x,y) \in R} \quad \sum_{z}  w_{z,R} = 1 , \\
      & \forall (z,R)  : w_{z,R} \geq 0 \enspace .
    \end{align*}
\end{minipage}
\begin{minipage}{3in}\vspace{-0.3in}
    \centerline{\underline{Dual}}\vspace{-0.2in}
 \begin{align*}
      \text{max:}\quad &   (1-\epsilon)\sum_{(x,y)} \mu_{x,y} + \sum_{(x,y)}\phi_{x,y} \\
      &  \forall (z,R) : \sum_{(x,y)\in R : (x,y,z) \in f} \mu_{x,y}   + \sum_{(x,y)\in  R} \phi_{x,y}  \leq 1,\\
	& \forall (x,y) : \mu_{x,y} \geq 0, \phi_{x,y} \in \bR \enspace .
 \end{align*}
\end{minipage}
}
\end{Def}

Our new bound is defined as follows. 
\begin{Def}[Public-coin partition bound]
Let $f \subseteq \X \times \Y \times \Z$ be a relation. Let $\ve > 0$. The $\ve$-public-coin partition bound of $f$, denoted 
$\prt_\ve(f)$, is given by the optimal value of the following linear program. Below $R$ represents a rectangle in $\X \times \Y$ and $P$ represents a partition along with outputs in $\Z$; that is $P = \{(z_1, R_1), (z_2, R_2), \cdots, (z_m, R_m)  \}$, such that $\{R_1, \cdots, R_m \}$ form a partition of $\X \times \Y$ into rectangles and $\forall i \in [m], z_i \in \Z$.

\vspace{0.2in}

{\footnotesize
\begin{minipage}{3in}
    \centerline{ \underline{Primal}}\vspace{-0.2in}
 \begin{align*}
      \text{min:}\quad & \sum_{z} \sum_{R}  w_{z,R} \\
      & \forall (x,y) : \sum_{z: (x,y,z) \in f} \sum_{R: (x,y) \in R} w_{z,R} \geq 1 - \epsilon,\\
      & \forall (x,y) : \sum_{R: (x,y) \in R} \quad \sum_{z}  w_{z,R} = 1 , \\
      & \forall (z,R) : w_{z,R} = \sum_{P: (z,R) \in P} a_p , \\
      & \sum_P a_P = 1  ,\\
      & \forall (z,R)  : w_{z,R} \geq 0 ; \quad  \forall P : a_P \geq 0 \enspace .
    \end{align*}
\end{minipage}
\begin{minipage}{3in}\vspace{-0.5in}
    \centerline{\underline{Dual}}\vspace{-0.2in}
 \begin{align*}
      \text{max:}\quad &   (1-\epsilon)\sum_{(x,y)} \mu_{x,y} + \sum_{(x,y)}\phi_{x,y} + \lambda \\
      &  \forall (z,R) : \sum_{(x,y)\in R : (x,y,z) \in f} \mu_{x,y}   + \sum_{(x,y)\in  R} \phi_{x,y}  + v_{z,R} \leq 1,\\
       &  \forall P : \sum_{(z,R)\in P} v_{z,R}    \geq \lambda ,\\
	& \forall (x,y) : \mu_{x,y} \geq 0, \phi_{x,y} \in \bR  ; \quad \forall (z,R) : v_{z,R} \in \bR , \\
     & \lambda \in \bR \enspace .
 \end{align*}
\end{minipage}
}
\end{Def}

We show that (the logarithm to the base two of) it is a lower bound on public-coin randomized communication complexity (please refer to~\cite{kushilevitz&nisan:cc} for standard definitions in communication complexity).

\begin{Lem}\label{lem:lowercc}
Let $f \subseteq \X \times \Y \times \Z$ be a relation. Let $\ve > 0$. Let $\R^{\pub}_{\ve} (f)$ represents the public-coin communication complexity of $f$ with worst-case error $\ve$. Then,
$$\log_2 \prt_{\ve} (f) \leq \R^{\pub}_{\ve} (f) .$$
\end{Lem}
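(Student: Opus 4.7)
The plan is to construct an explicit feasible solution to the primal linear program of $\prt_\ve(f)$ from a public-coin protocol with communication $c = \R^{\pub}_\ve(f)$ and worst-case error $\ve$, and to show that the value of the objective on this solution is at most $2^c$.

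First I would view the public-coin protocol as a distribution over deterministic protocols $\Pi_r$, each using at most $c$ bits of communication, indexed by the shared random string $r$. Standard communication complexity (as in Kushilevitz--Nisan) tells us that each leaf of a deterministic protocol corresponds to a combinatorial rectangle in $\X\times\Y$, and the set of leaves of $\Pi_r$ partitions $\X\times\Y$ into at most $2^c$ rectangles. Labeling each leaf by the output that $\Pi_r$ produces there gives a labeled partition $P_r = \{(z_1,R_1),\ldots,(z_m,R_m)\}$ with $m \leq 2^c$, exactly of the form allowed in the LP.

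Next I would define the LP variables by aggregating over the random string: set
\[
a_P \;=\; \sum_{r:\, P_r = P}\prob[r], \qquad w_{z,R} \;=\; \sum_{P:(z,R)\in P} a_P.
\]
By construction $\sum_P a_P = 1$, all variables are nonnegative, and the equality constraint $w_{z,R} = \sum_{P:(z,R)\in P}a_P$ holds. For any $(x,y)$, because each partition $P_r$ covers $(x,y)$ by exactly one labeled rectangle, $\sum_{R:(x,y)\in R}\sum_z w_{z,R} = \sum_r \prob[r] \cdot 1 = 1$, verifying the normalization constraint. The correctness constraint
$\sum_{z:(x,y,z)\in f}\sum_{R:(x,y)\in R} w_{z,R} \geq 1-\ve$
reduces to the statement that, averaged over $r$, the unique leaf of $\Pi_r$ containing $(x,y)$ carries an output $z$ with $(x,y,z)\in f$ with probability at least $1-\ve$, which is precisely the worst-case error guarantee.

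Finally I would bound the objective:
\[
\sum_{z,R} w_{z,R} \;=\; \sum_P a_P \cdot |P| \;\leq\; 2^c \sum_P a_P \;=\; 2^c,
\]
using that every partition arising from a deterministic $c$-bit protocol has at most $2^c$ labeled rectangles. Hence $\prt_\ve(f) \leq 2^c$, and taking $\log_2$ yields the lemma. There is no real obstacle here; the only thing to be careful about is to convert the output rule of the deterministic protocol into a labeling of the rectangle partition so that the LP's structural constraints $w_{z,R} = \sum_{P:(z,R)\in P}a_P$ and $\sum_P a_P = 1$ are satisfied simultaneously with the bound on $|P|$.
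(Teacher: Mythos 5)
Your proposal is correct and follows essentially the same approach as the paper: fix the public coins, extract from each deterministic restriction the labeled rectangle partition induced by its leaves, aggregate the coin distribution into the $a_P$ variables, derive the $w_{z,R}$ from them, check feasibility, and bound the objective by $\sum_P a_P \cdot |P| \le 2^c$. The only cosmetic difference is that you define $a_P = \sum_{r: P_r = P}\prob[r]$, which is a slightly cleaner way of handling the case where distinct coin strings induce the same partition; the paper's phrasing $a'_{P_r} \defeq q_r$ implicitly assumes this aggregation.
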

\begin{proof} This proof goes along similar lines as the proof of~\cite{JainK10} for analogous result about the partition bound.
 
Let $\mcP$ be a public coin randomized protocol for $f$ with communication $c \defeq \R^\pub_\epsilon(f)$ and worst case error $\epsilon$. For binary string $r$, let $\mcP_r$ represent the deterministic communication protocol obtained from $\mcP$ on fixing the public coins to $r$.  Every deterministic communication protocol amounts to partitioning the inputs in $\X\times\Y$ into rectangles and outputting an element in $\Z$ corresponding to each rectangle in the partition.  Let $P_r = \{(z^r_1, R^r_1), (z^r_2, R^r_2), \cdots, (z^r_m, R^r_m)  \}$, be the corresponding partition along with the outputs, that is $\{R^r_1, \cdots, R^r_m \}$ form a partition of $\X \times \Y$ into rectangles and $\forall i \in [m], z^r_i \in \Z$. Let $q_r$ represent the probability of string $r$ in $\mcP$. For $P_r$ define $a'_{P_r} \defeq q_r$. For the partitions $P$ that do not correspond to any random string $r$ in $\mcP$, define $a'_P=0$.  For any $(z,R)$ define,
$$w'_{z,R} \defeq  \sum_{P : (z,R) \in P} a'_{P} \enspace .$$
It is easily seen that for all $(x,y,z) \in \X \times\Y\times\Z$:
$$\Pr[\mcP \mbox{ outputs } z \mbox{ on input } (x,y)] = \sum_{R: (x,y)\in R} w'_{z,R} \enspace .$$
Since the protocol has error at most $\epsilon$ on all inputs we get the constraints:
$$\forall (x,y) : \sum_{z: (x,y,z) \in f} \sum_{R: (x,y) \in R} w'_{z,R} \geq 1 - \epsilon .$$
Also since the $\Pr[\mcP \mbox{ outputs some } z\in\Z \mbox{ on input } (x,y)]= 1$, we get the constraints:
$$\forall (x,y) :  \sum_{z}  \sum_{R: (x,y) \in R} w'_{z,R} = 1 \enspace .$$
We also have by construction:
$$\sum_{P} a'_{P} = 1 ; \quad \forall (z,R): w'_{z,R} \geq 0; \quad \forall P: a'_P \geq 0 .$$
Therefore   $\{w'_{z,R}\} \cup \{a'_P \}$ is feasible for the primal of $\prt_\epsilon(f)$. 

We know that for each $r$,  $\abs{P_r} \leq 2^c$, since the communication in $\mcP_r$ is at most $c$ bits. Hence,
$$\prt_\epsilon(f) \leq \sum_{z} \sum_{R }  w'_{z,R} = \sum_r a'_{P_r} \cdot |P_r| \leq 2^c \sum_r a'_{P_r} = 2^c \enspace .$$
\end{proof}

Next we show that the square of (the logarithm to the base two of) our new bound forms an upper bound on the public-coin communication complexity.

\begin{Thm}
\label{thm:maincc}Let $f \subseteq \X \times \Y \times \Z$ be a relation. Let $\ve > 0$. We have,
$$ \R^{\pub}_{2\ve} (f) \leq \left(\log_2 \prt_\ve(f) + \log_2 \frac{1}{\ve} + 1\right)^2 . $$  
\end{Thm}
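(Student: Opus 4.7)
The approach is to read a public-coin protocol directly off the feasible region of the $\prt$ linear program. First I would interpret an optimal primal $\{w_{z,R}\}\cup\{a_P\}$ achieving value $\prt_\ve(f)$ as follows: the constraints $a_P \geq 0$ and $\sum_P a_P = 1$ make $\{a_P\}$ a probability distribution over labeled partitions of $\X\times\Y$, and
\[
\sum_{z,R} w_{z,R} \;=\; \sum_{z,R}\sum_{P:(z,R)\in P} a_P \;=\; \sum_P a_P \cdot |P| \;=\; \mbE_{P\sim a}\!\left[|P|\right],
\]
so the LP value equals the expected number of rectangles in a sampled partition. The first LP constraint simultaneously rewrites as: for every $(x,y)$, $\Pr_{P\sim a}[\text{the label of the rectangle of $P$ containing $(x,y)$ is correct for }f] \geq 1-\ve$.

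Next, set $k \defeq \prt_\ve(f)/\ve$. Markov's inequality gives $\Pr_{P\sim a}[|P|>k]\leq\ve$. The protocol Alice and Bob would run is: use the shared public coins to sample $P\sim a$; since both see the full partition they detect $|P|>k$ locally, in which case they abort and output an arbitrary $z\in\Z$ with no communication; otherwise they hold a common labeled partition of size at most $k$ and simulate it deterministically. By a union bound over the events ``$|P|>k$'' and ``$P$ is wrong on $(x,y)$'', the error on any fixed input is at most $2\ve$, as required.

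The nontrivial ingredient is the deterministic simulation in the non-aborted case. I would invoke the Aho--Ullman--Yannakakis theorem extended to labels in $\Z$: any partition of $\X\times\Y$ into $m$ labeled rectangles, viewed as the label map $(x,y)\mapsto z_{i^\ast(x,y)}$, has deterministic communication complexity at most $(\log_2 m + 1)^2$. This is the step I expect to take the bulk of the technical care: for Boolean range it is the classical alternating-halving protocol that repeatedly shrinks the live set of candidate rectangles by a factor of two using minimum covers, spending $O(\log m)$ bits per halving and $O(\log m)$ halvings; for general $\Z$ the same halving argument goes through essentially verbatim since each halving depends only on the rectangle geometry of the partition and not on the labels, which are read off only once the live set has collapsed to a single rectangle. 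Applying this with $m\leq k$ yields a subprotocol of at most $(\log_2 k + 1)^2 = \bigl(\log_2 \prt_\ve(f) + \log_2(1/\ve) + 1\bigr)^2$ bits, which is exactly the target bound. Everything else --- the LP-to-distribution reinterpretation, the Markov truncation, and the additive error accounting --- is routine once this deterministic simulation lemma is in place.
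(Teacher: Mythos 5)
Your proposal is correct and follows essentially the same route as the paper's proof: interpret the primal solution as a distribution over labeled partitions, truncate the partitions of size exceeding $\prt_\ve(f)/\ve$ via Markov (the paper renormalizes the distribution where you abort and union-bound, but the error accounting is the same), and invoke the $(\lceil\log_2 m\rceil)^2$-bit deterministic simulation of an $m$-rectangle partition, which the paper also takes from Theorem 2.11 of Kushilevitz--Nisan and reproduces in its appendix. No gaps.
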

\begin{proof}
Let $\prt_\ve(f)  = 2^c $. Let $\{ w_{z,R} \} \cup \{a_P\}$ be an optimal solution for the primal. Let $n_P$ be the number of rectangles in $P$. We have,
$$ \sum_P a_P \cdot n_P = \sum_{z,R} w_{z,R} = 2^c .$$
Define $B \defeq \{ P ~| ~ n_P \geq \frac{1}{\ve} 2^c\}$. Then $\delta \defeq \sum_{P \in B} a_P \leq \ve$. Define $a'_P \defeq \frac{1}{1-\delta} a_P $ for $P\notin B$ and $a'_P \defeq 0$ for $P \in B$. Define $w'_{z,R} \defeq \sum_{P: (z,R) \in P} a'_P$. Then we have,
\begin{align*}
      & \forall (x,y) : \sum_{z: (x,y,z) \in f} \sum_{R: (x,y) \in R} w'_{z,R} \geq 1 - 2\epsilon,\\
      & \forall (x,y) : \sum_{R: (x,y) \in R} \quad \sum_{z}  w'_{z,R} = 1 , \\
      & \forall (z,R) : w'_{z,R} = \sum_{P: (z,R) \in P} a'_p , \\
      & \sum_P a'_P = 1  ,\\
      & \forall (z,R)  : w'_{z,R} \geq 0 ; \quad \forall P : a'_P \geq 0 \enspace .
\end{align*}

We know that a partition with $m$ rectangles can be realized by a communication protocol with communication $(\lceil \log_2 m \rceil)^2$ (arguments as in the proof of Theorem 2.11 of~\cite{kushilevitz&nisan:cc}, we reproduce them in Section~\ref{sec:partcc} for completeness). Consider a public-coin communication protocol $\Pi$ as follows. 
\begin{enumerate}
\item Alice and Bob (using public coins) choose a $P= \{(z_1, R_1), (z_2, R_2), \cdots, (z_m, R_m) \} $ with probability $a_P'$.
\item They communicate to realize the partition $\{R_1, R_2, \cdots, R_m\}$ with communication bounded by $(c + \log_2 \frac{1}{\ve} + 1)^2$.
\item If they end up with rectangle $R_i$, they output $z_i$.
\end{enumerate}
It is clear that the worst case communication of the protocol is bounded by $(c + \log_2 \frac{1}{\ve} + 1)^2$. The condition,
 $$\forall (x,y) : \sum_{z: (x,y,z) \in f} \sum_{R: (x,y) \in R} w'_{z,R} \geq 1 - 2\epsilon ,$$
implies that the protocol has worst case error at most $2 \ve$.
Therefore,
$$ \R^{\pub}_{2\ve} (f) \leq \left(\log_2 \prt_\ve(f) + \log_2 \frac{1}{\ve}  + 1\right)^2 .  $$

\end{proof}

\section{Query complexity}
In this section we introduce our new bound in the query complexity setting.

Let $f \subseteq \{0,1\}^n \times \Z$ be a relation.  An {\em assignment} $A : S \rightarrow \{0,1\}^l$ is an assignment of values to some subset $S$ of $n$ variables (with $|S| = l$). We say that $A$ is {\em consistent} with $x\in\{0,1\}^n$ if $x_i= A(i)$ for all $i\in S$. We write $x\in A$ as shorthand for `$A$ is consistent with $x$'. We write $|A|$ to represent the size of $A$ which is the cardinality of $S$ (not to be confused with the number of consistent inputs). Furthermore we say that an index $i$ {\em appears} in $A$, iff $i \in S$ where $S$ is the subset of $[n]$ corresponding to $A$.
Let $\mcA$ denote the set of all assignments.  Below we assume $x \in \{0,1\}^n $, $A \in \mcA$ and $z \in \Z$. Below $P$ represents a partition along with outputs in $\Z$; that is $P = \{(z_1, A_1), (z_2, A_2), \cdots, (z_m, A_m)  \}$, such that $\{A_1, \cdots, A_m \}$ form a partition of $\{0,1\}^n$  into assignments (that is for each $x \in \{0,1\}^n$, there is a unique $i\in [m]$ such that $x \in A_i$) and $\forall i \in [m], z_i \in \Z$.

Let us first recall the partition bound of~\cite{JainK10}.
\begin{Def}[Partition bound~\cite{JainK10}]

Let $f \subseteq  \{0,1\}^n \times \Z$ be a relation. Let $\ve > 0$. The $\ve$-partition bound of $f$, denoted $\oprt_\ve(f)$, is given by the optimal value of the following linear program.

\vspace{0.2in}

{\footnotesize
\begin{minipage}{3in}
    \centerline{\underline{Primal}}\vspace{-0.1in}
    \begin{align*}
      \text{min:}\quad & \sum_{z} \sum_{A}  w_{z,A} \cdot 2^{|A|} \\
       \quad &  \forall x  : \sum_{z: (x,z) \in f} \sum_{A: x \in A} w_{z,A} \geq 1 - \epsilon,\\
      & \forall x : \sum_{A: x \in A} \quad \sum_{z}  w_{z,A} = 1 , \\
      & \forall (z,A)  : w_{z,A} \geq 0 \enspace .
         \end{align*}
\end{minipage}
\begin{minipage}{3in}\vspace{-0.3in}
    \centerline{\underline{Dual}}\vspace{-0.1in}
        \begin{align*}
    \text{max:}\quad &   (1-\epsilon)\sum_{x} \mu_{x} + \sum_{x}\phi_{x} \\
      &  \forall (z,A) : \sum_{x\in A : (x,z) \in f} \mu_{x}   + \sum_{x\in  A} \phi_{x}   \leq 2^{|A|},\\
 	& \forall x : \mu_{x} \geq 0, \phi_{x} \in \bR  \enspace .
    \end{align*}
\end{minipage}
}
\end{Def}
Our new bound is defined as follows.
\begin{Def}[Public-coin partition bound]
Let $f \subseteq  \{0,1\}^n \times \Z$ be a relation. Let $\ve > 0$. The $\ve$-public-coin partition bound of $f$, denoted $\prt_\ve(f)$, is given by the optimal value of the following linear program.

\vspace{0.2in}

{\footnotesize
\begin{minipage}{3in}
    \centerline{\underline{Primal}}\vspace{-0.1in}
    \begin{align*}
      \text{min:}\quad & \sum_{z} \sum_{A}  w_{z,A} \cdot 2^{|A|} \\
       \quad &  \forall x  : \sum_{z: (x,z) \in f} \sum_{A: x \in A} w_{z,A} \geq 1 - \epsilon,\\
      & \forall x : \sum_{A: x \in A} \quad \sum_{z}  w_{z,A} = 1 , \\
       & \forall (z,A) : w_{z,A} = \sum_{P: (z,A) \in P} a_p , \\
      & \sum_P a_P = 1  ,\\
      & \forall (z,A)  : w_{z,A} \geq 0 ; \quad  \forall P : a_P \geq 0 \enspace .
    \end{align*}
\end{minipage}
\begin{minipage}{3in}\vspace{-0.5in}
    \centerline{\underline{Dual}}\vspace{-0.1in}
        \begin{align*}
    \text{max:}\quad &   (1-\epsilon)\sum_{x} \mu_{x} + \sum_{x}\phi_{x} + \lambda \\
      &  \forall (z,A) : \sum_{x\in A : (x,z) \in f} \mu_{x}   + \sum_{x\in  A} \phi_{x}  + v_{z,A} \leq 2^{|A|},\\
       &  \forall P : \sum_{(z,A)\in P} v_{z,A}    \geq \lambda ,\\
	& \forall x : \mu_{x} \geq 0, \phi_{x} \in \bR  ; \quad  \forall (z,A) : v_{z,A} \in \bR ,\\
	& \lambda \in \bR \enspace .
    \end{align*}
\end{minipage}
}
\end{Def}

We show that (the logarithm to the base two of) our new bound is a lower bound on randomized query complexity. 
\begin{Lem}
Let $f \subseteq  \{0,1\}^n \times \Z$ be a relation. Let $\ve > 0$. Let $\R_{\ve} (f)$ represent the randomized query complexity of $f$ with worst case error $\ve$. Then,
$$\frac{1}{2} \log_2 \prt_{\ve} (f) \leq \R_{\ve} (f) .$$
\end{Lem}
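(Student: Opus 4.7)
The plan is to mimic the proof of Lemma~\ref{lem:lowercc} almost verbatim, with a randomized decision tree playing the role of the public-coin communication protocol. Starting from a randomized query algorithm $\mcA$ with $q \defeq \R_\ve(f)$ queries and worst-case error $\ve$, I will build a feasible primal solution for the $\ve$-public-coin partition bound LP whose objective is at most $2^{2q}$, which immediately gives $\prt_\ve(f) \leq 2^{2q}$ and hence the lemma.

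Conditioning $\mcA$ on each public-coin string $r$ (occurring with probability $q_r$) yields a deterministic decision tree $T_r$ of depth at most $q$. Its leaves naturally partition $\{0,1\}^n$ into assignments $A$ with $|A| \leq q$, each labelled by an output $z \in \Z$; write $P_r = \{(z^r_1, A^r_1), \dots, (z^r_{m_r}, A^r_{m_r})\}$ for the resulting partition-with-outputs, noting $m_r \leq 2^q$. I would set $a'_{P_r} \defeq q_r$ (aggregating strings $r$ that induce the same partition), $a'_P \defeq 0$ for the remaining $P$, and $w'_{z,A} \defeq \sum_{P : (z,A) \in P} a'_P$. Feasibility for the primal is then a direct transcription of the communication case: the identity $\Pr[\mcA \text{ outputs } z \text{ on } x] = \sum_{A : x \in A} w'_{z,A}$ together with the $\ve$-correctness of $\mcA$ gives the two per-input constraints, and $\sum_P a'_P = 1$ with nonnegativity hold by construction.

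The only genuinely new step, and the source of the factor $\tfrac12$ in the statement, is the objective bound. Each tree $T_r$ contributes
$$\sum_{(z,A) \in P_r} 2^{|A|} \;=\; \sum_{i=1}^{m_r} 2^{|A^r_i|} \;\leq\; m_r \cdot 2^q \;\leq\; 2^{2q},$$
using $|A^r_i| \leq q$ and $m_r \leq 2^q$. Averaging over $r$ with weights $q_r$ then bounds $\sum_{z,A} w'_{z,A} \cdot 2^{|A|}$ by $2^{2q}$. The point worth flagging is that this quadratic blow-up is intrinsic rather than an artifact of the analysis: unlike the communication LP, whose objective merely counts rectangles so that a depth-$c$ protocol tree contributes $\leq 2^c$, the query LP weights each assignment by $2^{|A|}$, and a balanced depth-$q$ tree (with $2^q$ leaves each of weight $2^q$) already saturates $2^{2q}$. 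So I do not expect the factor $\tfrac12$ to be removable by sharpening this per-partition estimate.
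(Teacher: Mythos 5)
Your proof is correct and follows essentially the same route as the paper: build the primal solution from the partitions induced by the deterministic decision trees $\mcP_r$, then bound the objective using $|A| \leq c$ and $|P_r| \leq 2^c$ to get $\prt_\ve(f) \leq 2^{2c}$. The only cosmetic difference is that you bound the contribution per tree and then average, while the paper factors out $2^c$ from the whole sum first; your closing remark that the quadratic loss is saturated by a balanced depth-$c$ tree is a nice observation not made in the paper but does not change the argument.
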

\begin{proof}
Our proof goes along arguments similar to~\cite{JainK10} for analogous result for the partition bound. 
 
Let $\mcP$ be a randomized query algorithm which achieves $c\defeq \R_\epsilon(f)$. Let $\mcP_r$  be the deterministic query algorithm, arising from $\mcP$, corresponding to random string $r$. We know that each such deterministic query algorithm is a binary decision tree of depth at most $c$ (please refer to~\cite{buhrman:dectreesurvey} for standard definitions related to query complexity). We note that the roots of a decision tree (together) represent a partition of the inputs  into assignments  along with outputs in $Z$. Let $P_r = \{(z^r_1, A^r_1), (z^r_2, A^r_2), \cdots, (z^r_m, A^r_m)  \}$, represent the partition along with outputs corresponding to random string $r$, where $\{A^r_1, \cdots, A^r_m \}$ form a partition of $\{0,1\}^n$  into assignments and $\forall i \in [m], z^r_i \in \Z$. Let $q_r$ represent the probability of string $r$ in $\mcP$. For $P_r$ define $a'_{P_r} \defeq q_r$. For the partitions $P$ that do not correspond to any string $r$ in $\mcP$, define $a'_P=0$.  For any $(z,A)$ define,
$$w'_{z,A} \defeq \sum_{P: (z,A) \in P} a'_P .$$
 
As in the proof of Lemma~\ref{lem:lowercc}, we can argue that $\{w'_{z,A}\} \cup  \{a'_P  \}$ is feasible for the primal of $\prt_\epsilon(f)$.  Note that for each $(z,A)$ with $w'_{z,A} >0$, we have $|A|\leq c$. Also $|P_r| \leq 2^c$ since the depth of the corresponding binary decision tree is at most $c$. Now,
\begin{align*}
\prt_\epsilon(f) &= \sum_{z} \sum_{A}  w'_{z,A} 2^{|A|}  \leq 2^{c} \left(\sum_{z} \sum_{A}  w'_{z,A} \right)  \\
& \leq 2^{c} \left(\sum_r a'_{P_r} \cdot |P_r| \right) \leq 2^{2c} \sum_r a'_{P_r} = 2^{2c} \enspace .
\end{align*}
Hence our result.
\end{proof}

Next we show that the square of  (the logarithm to the base two of)  our new bound forms an upper bound on randomized query complexity.
\begin{Thm}
\label{thm:mainq}
Let $f \subseteq  \{0,1\}^n \times \Z$ be a relation. Let $\ve > 0$. Then,
$$ \R_{2\ve} (f) \leq \left(\log \prt_\ve(f) + \log_2 \frac{1}{\ve} \right)^2 .  $$  
\end{Thm}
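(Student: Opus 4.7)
The plan is to mirror the proof of Theorem~\ref{thm:maincc} step for step, replacing rectangle counts by assignment-weighted costs. Set $\prt_\ve(f) = 2^c$ and fix an optimal primal solution $\{w_{z,A}\} \cup \{a_P\}$. For each partition $P$ in the support, define its \emph{cost}
\[
c_P \;:=\; \sum_{(z,A)\in P} 2^{|A|},
\]
so that the primal objective rewrites as $\sum_P a_P \cdot c_P = \sum_{z,A} w_{z,A}\cdot 2^{|A|} = 2^c$.

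The first step is a Markov-style truncation. Put $T := 2^c/\ve$ and $B := \{P : c_P \geq T\}$; then Markov gives $\delta := \sum_{P\in B} a_P \leq \ve$. Rescale to $a'_P := a_P/(1-\delta)$ for $P\notin B$, $a'_P := 0$ for $P\in B$, and $w'_{z,A} := \sum_{P:(z,A)\in P} a'_P$. Exactly as in the proof of Theorem~\ref{thm:maincc}, the rescaled solution is primal-feasible except that the correctness threshold relaxes from $1-\ve$ to $1-2\ve$.

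The randomized query algorithm is the obvious one: sample a partition $P$ with probability $a'_P$ using the random tape, and then run a deterministic decision tree realizing $P$. The worst-case error is at most $2\ve$, so it is enough to prove that every surviving partition (every $P$ with $c_P<T$) can be realized by a deterministic decision tree of depth at most $(\log_2 T)^2 = (\log_2 \prt_\ve(f) + \log_2(1/\ve))^2$. Two elementary consequences of $c_P<T$ are useful: each assignment obeys $2^{|A|}\leq c_P$, so $|A|\leq \log_2 T$; and Cauchy--Schwarz applied to $m_P = \sum_A 2^{|A|/2}\cdot 2^{-|A|/2}$ together with the Kraft equality $\sum_{A\in P} 2^{-|A|} = 1$ (which holds because the assignments partition $\{0,1\}^n$) yields $m_P \leq \sqrt{c_P} < \sqrt{T}$, whence $\log_2 m_P < \tfrac{1}{2}\log_2 T$.

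The depth bound is then the query-complexity analog of Theorem~2.11 of~\cite{kushilevitz&nisan:cc} invoked inside the proof of Theorem~\ref{thm:maincc}: any partition of $\{0,1\}^n$ into $m$ subcubes each of size at most $d$ can be realized by a decision tree of depth at most $d \cdot \lceil \log_2 m \rceil$, via a phase-based argument in which each phase spends at most $d$ queries to halve the set of surviving candidate parts. Applied with $d \leq \log_2 T$ and $\log_2 m_P \leq \tfrac{1}{2}\log_2 T$, this gives total depth at most $(\log_2 T)^2$, as required. The main obstacle is supplying this decision-tree realization lemma, which I would handle in a short subsection paralleling Section~\ref{sec:partcc}; once it is in place the rest of the argument is a routine averaging over the sampled partition, exactly as in the communication-complexity case.
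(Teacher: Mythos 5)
Your truncation step is fine and essentially equivalent to the paper's: since all assignments in $P$ are disjoint, $c_P \geq 2^{|A|}$ for every $(z,A)\in P$, so your set $B=\{P : c_P\ge 2^c/\ve\}$ contains the paper's $B=\{P : \exists (z,A)\in P,\ |A|>c+\log_2(1/\ve)\}$, and both give $\delta\le\ve$ by Markov and leave only partitions with every assignment of size at most $c+\log_2(1/\ve)$. The Cauchy--Schwarz bound $m_P\le\sqrt{c_P}$ via Kraft is also a correct (and cute) observation.

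The gap is the realization lemma. You invoke an unproved claim that a partition of $\zo^n$ into $m$ subcubes of codimension at most $d$ can be realized by a decision tree of depth $d\cdot\lceil\log_2 m\rceil$, to be proved by ``phases that halve the set of surviving parts.'' That is the structure of the \emph{communication} realization argument (Section~\ref{sec:partcc}), and it does not transfer. In communication complexity, the halving is driven by the Alice/Bob alternation: the rectangle containing $(x,y)$ intersects either Alice's row or Bob's column in at most $m/2$ of the rectangles, and whoever sees the small side can name the restriction. In query complexity there is no such two-sided structure, and querying the support of some $A_i$ is only guaranteed to eliminate $A_i$ itself (or to terminate) --- the other parts each have \emph{some} bit in that support separating them from $A_i$, but the worst-case input may agree with all of them on those bits, eliminating nothing else. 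So the ``each phase halves the candidate set'' step is unjustified, and with it the whole $d\log m$ bound. What the paper actually proves (Section~\ref{sec:partq}) is the weaker statement that max-codimension $d$ suffices for depth $d^2$, via a different progress measure: querying $S_{A_1}$ either terminates or, since every $A_j$ disagrees with $A_1$ somewhere on $S_{A_1}$, shrinks the \emph{unknown support} of every surviving $A_j$ by at least one; after $d$ such rounds of at most $d$ queries each, the unknown supports are empty. Plugging $d\le c+\log_2(1/\ve)$ into this $d^2$ bound already gives the theorem, so the extra $m_P\le\sqrt{c_P}$ estimate is not needed. If you can actually prove the $d\lceil\log_2 m\rceil$ realization lemma, you would get roughly a factor-$2$ saving in the exponent, but as written the proposal is missing the key lemma and the suggested proof of it is the wrong one for the query model.
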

\begin{proof}
Let $\prt_\ve(f)  = 2^c $. Let $\{ w_{z,A} \} \cup \{a_P\}$ be an optimal solution for the primal. We have,
$$ \sum_P \sum_{A: (z,A) \in P} a_P \cdot 2^{|A|} = \sum_{z,A} w_{z,A} \cdot 2^{|A|}= 2^c .$$
Define $B \defeq \{ P ~| ~ \exists (z,A) \in P \text{ with } |A| > c + \log_2 \frac{1}{\ve}\}$. Then $\delta \defeq \sum_{P \in B} a_P \leq \ve$. Define $a'_P \defeq \frac{1}{1-\delta} a_P $ for $P\notin B$ and $a'_P \defeq 0$ for $P \in B$. Define $w'_{z,A} \defeq \sum_{P: (z,A) \in P} a'_P$. Then we have,
\begin{align*}
      & \forall x : \sum_{z: (x,z) \in f} \sum_{A: x \in A} w'_{z,A} \geq 1 - 2\epsilon,\\
      & \forall x : \sum_{A: x\in A} \quad \sum_{z}  w'_{z,A} = 1 , \\
      & \forall (z,A) : w'_{z,A} = \sum_{P: (z,A) \in P} a'_P , \\
      & \sum_P a'_P = 1  ,\\
      & \forall (z,A)  : w'_{z,A} \geq 0 ; \quad \forall P : a'_P \geq 0 \enspace .
\end{align*}

We know that a partition with assignments each of length at most $m$ can be realized by a query protocol with  $m^2 $ queries (arguments as in the proof of Theorem 11 of~\cite{buhrman:dectreesurvey}, we reproduce them in Section~\ref{sec:partq} for completeness). Consider a randomized query protocol $\Pi$ as follows. 
\begin{enumerate}
\item Alice  (randomly) chooses a $P= \{(z_1, A_1), (z_2, A_2), \cdots, (z_s, A_s) \} $ with probability $a_P'$.
\item She queries to realize the partition $\{A_1, A_2, \cdots, A_s\}$ with  $(c + \log_2 \frac{1}{\ve})^2$ queries. 
\item If she ends up with assignment $A_i$, she outouts $z_i$.
\end{enumerate}
It is clear that the worst case queries of the protocol is $(c + \log_2 \frac{1}{\ve})^2$. The condition,
 $$\forall x : \sum_{z: (x,z) \in f} \sum_{A: x \in A} w'_{z,A} \geq 1 - 2\epsilon ,$$
implies that the protocol has worst case error at most $2 \ve$.
Therefore,
$$ \R_{2\ve} (f) \leq \left(\log_2 \prt_\ve(f) + \log_2 \frac{1}{\ve} \right)^2 .  $$  
\end{proof}

\subsection*{Acknowledgment}  The work done is supported by the internal grants of the Center for Quantum Technologies (CQT), Singapore. Part of the work done when N.K.V was visiting CQT.

\newcommand{\etalchar}[1]{$^{#1}$}

\end{document}